\def\BibTeX{{\rm B\kern-.05em{\sc i\kern-.025em b}\kern-.08em
    T\kern-.1667em\lower.7ex\hbox{E}\kern-.125emX}}
\newtheorem{lemma}{Lemma}
\newtheorem{proposition}{Proposition}
\newenvironment{proof}{{\indent \indent \it Proof:\quad}}{\hfill $\blacksquare$\par}
\begin{document}

\title{UAV-enabled Integrated Sensing and Communication: Tracking Design and Optimization}

\author{\IEEEauthorblockN{Yifan~Jiang, Qingqing~Wu, \textit{Senior Member, IEEE}, Wen~Chen, \textit{Senior Member, IEEE}, and Kaitao~Meng}
\thanks{Yifan~Jiang is with the State Key Laboratory of Internet of Things for Smart City, University of Macau, Macao 999078, China (email: yc27495@umac.mo) and also with Shanghai Jiao Tong University, Shanghai 200240, China. 
Qingqing~Wu and Wen~Chen are with the Department of Electronic Engineering, Shanghai Jiao Tong University, Shanghai 200240, China (e-mail: \{qingqingwu@sjtu.edu.cn; wenchen@sjtu.edu.cn\}).  
Kaitao Meng is with the Department of Electrical and Electronic Engineering, University College London, WC1E 7JE, UK (email: kaitao.meng@ucl.ac.uk).}
}

\maketitle

\begin{abstract}
    Integrated sensing and communications (ISAC) enabled by unmanned aerial vehicles (UAVs) is a promising technology to facilitate target tracking applications. 
    In contrast to conventional UAV-based ISAC system designs that mainly focus on estimating the target position, the target velocity estimation also needs to be considered due to its crucial impacts on link maintenance and real-time response, which requires new designs on resource allocation and tracking scheme. 
    In this paper, we propose an extended Kalman filtering-based tracking scheme for a UAV-enabled ISAC system where a UAV tracks a moving object and also communicates with a device attached to the object. 
    Specifically, a weighted sum of predicted posterior Cramér-Rao bound (PCRB) for object relative position and velocity estimation is minimized by optimizing the UAV trajectory, where an efficient solution is obtained based on the successive convex approximation method. 
    Furthermore, under a special case with the measurement mean square error (MSE), the optimal relative motion state is obtained and proved to keep a fixed elevation angle and zero relative velocity. 
    Numerical results validate that the solution to the predicted PCRB minimization can be approximated by the optimal relative motion state when predicted measurement MSE dominates the predicted PCRBs, as well as the effectiveness of the proposed tracking scheme. 
    Moreover, three interesting trade-offs on system performance resulted from the fixed elevation angle are illustrated. 
\end{abstract}

\begin{IEEEkeywords}
    \vspace{-0.5mm}
    ISAC, UAV, CRB, tracking
\end{IEEEkeywords}

\section{Introduction}
Integrated sensing and communications (ISAC) is expected to merge enhanced sensing services with conventional communication functionalities, including target detection, localization, and tracking \cite{FanLiu-Toward6G}. 
Existing target tracking services in ISAC systems are mainly provided by fixed-position terrestrial ISAC base stations (BSs). 
However, the sensing performance of terrestrial ISAC BSs is limited by the constrained coverage area and the unfavorable channel conditions, such as non line-of-sight (LoS) links. 
To overcome these challenges, the unmanned aerial vehicle (UAV)-enabled ISAC has received increasing research interests thanks to the high maneuverability of UAVs and high probabilities of providing LoS links \cite{KTMeng-2023-TWC-UAV-IPSAC}. 

Currently, some existing works on UAV-enabled tracking systems investigated the designs for maximizing overall system performance, especially trajectory designs of UAV BSs \cite{JunWu-UAVISAC-secure, JunWu-2023-IOT-UAVISAC}. 
In \cite{JunWu-UAVISAC-secure}, the trajectory of a UAV BS simultaneously communicating with a legitimate user and tracking an eavesdropper was designed to maximize the achievable secrecy rate of the user. 
In \cite{JunWu-2023-IOT-UAVISAC}, the trajectories of multiple UAVs tracking a ground moving user were jointly designed to simultaneously maximize the data rate and minimize the predicted posterior Cramér-Rao bound (PCRB) for the estimated user position, which is constituted by the elements of the state prediction mean square error (MSE) matrix and the predicted measurement MSE matrix. 
However, neither of the two aforementioned works considered the predicted PCRB for velocity estimation, which has crucial impacts on the link maintenance (e.g., beam alignment) and system responding speed \cite{FanLiu-Toward6G}.
In fact, most existing ISAC schemes that only focus on estimating the target position are not feasible for highly-reliable target tracking, which thus motivates our new designs.

In this paper, an extended Kalman filtering (EKF)-based tracking scheme is proposed for a UAV-enabled ISAC system, where the UAV trajectory is optimized to minimize a weighted sum of predicted PCRBs for the predicted state variables (i.e., the relative position and the relative velocity). 
This thus generalizes the previous works on UAV-enabled ISAC system designs.
The main contributions of this paper are summarized as follows:
\textbf{1)} An efficient solution is obtained based on the successive convex approximation (SCA) method to optimize the predicted state variables for the weighted sum-predicted PCRB minimization.
\textbf{2)} Under a special case where the state prediction MSE matrix is ignored, the optimal relative motion state is obtained and proved to maintain a fixed elevation angle and zero relative velocity between the UAV and the object.
\textbf{3)} Simulation results verify that when the predicted measurement MSE dominates the predicted PCRBs, the solution for the weighted sum-predicted PCRB minimization can be approximated by the optimal relative motion state obtained under the considered special case, and further illustrate three interesting trade-offs achieved by the fixed elevation angle.

\begin{figure}[!t]
    \centering
    \includegraphics[width=0.48\textwidth]{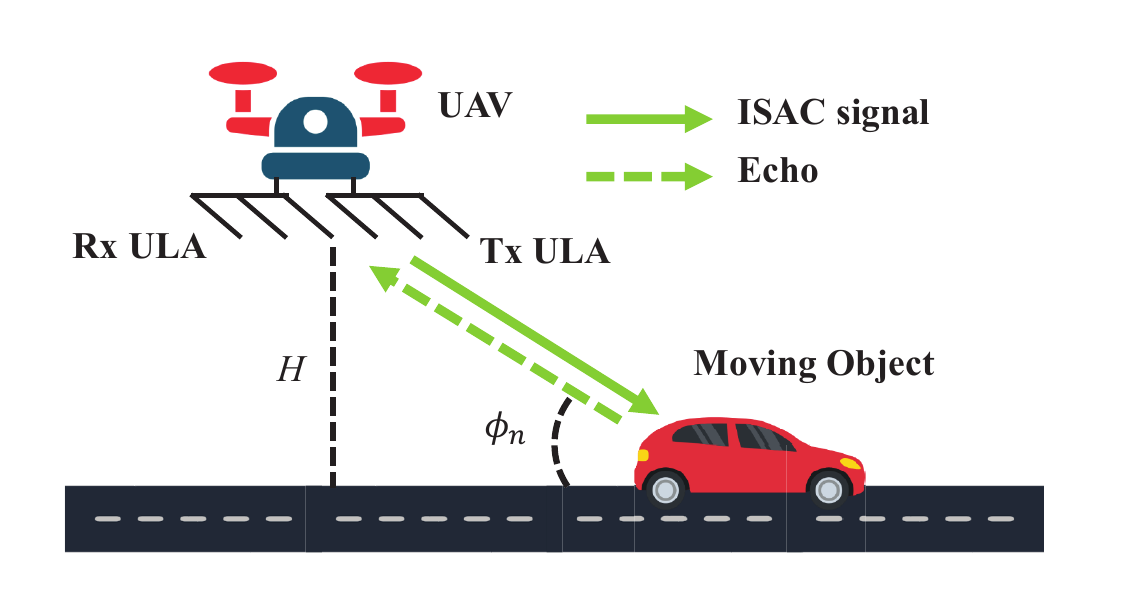}
    \caption{A UAV-enabled ISAC system.}
    \label{fig1}
\end{figure}

\section{System Model and Problem Formulation}\label{SecII}
As illustrated in Fig.\ref{fig1}, we consider a downlink UAV-enabled ISAC system, where a UAV simultaneously tracks a moving object and communicates with a single-antenna device attached to the moving object within a service duration $T$.
Specifically, the whole service duration $T$ is discretized into $N$ small time slots with a constant length of $\Delta T = T/N$. 
Let $n$, $n\in\mathbb{N}=\{1,2,...,N\}$, denote the index of time slots. 
At each time slot, we assume that the motion states (i.e., the position and the velocity) of both the moving object and the UAV keep constant \cite{FanLiu2020TWC-PB}. 
Therefore, at the $n$th time slot, the object motion state relative to the UAV can be represented by state variables $\mathbf{x}_{n}=[x_{n}, v_{n}]^{T}$, where $x_{n}$ and $v_{n}$ denote the relative position and the relative velocity, respectively.
In addition, the UAV is equipped with a uniform linear array (ULA) with $N_{\text{t}}$ transmit antennas along the $x$-axis and a parallel ULA with $N_{\text{r}}$ receive antennas. 
For ease of analysis, we assume that the object moves along a horizontal straight path parallel to the $x$-axis, and the UAV also flies horizontally along the $x$-axis with a constant altitude $H$.\footnote{The UAV and the object trajectories can be extended to the two-dimensional case where both are located in parallel planes by deploying the uniform planar array with receive antennas on the UAV to estimate both the elevation angle and the azimuth angle. Also, it can be extended to the case where the UAV can change its height by modelling the additional Doppler shift resulted from the UAV vertical movement and the UAV height constraint.}

\subsection{EKF-based Tracking Scheme}
In our proposed tracking scheme, the UAV estimates state variables and designs its next time slot's trajectory at each time slot.
Specifically, we assume that the object movement follows the constant-velocity model\cite{CVModel}, and thus the state evolution model can be derived from the geometry as illustrated in Fig.\ref{fig1}, which is given by
\vspace{-1.5mm}
\begin{equation}\label{SEmodel}
	\mathbf{x}_{n} = \mathbf{G}\mathbf{x}_{n-1} - \mathbf{u}_{\text{A},n} + \mathbf{z}_{\text{s},n},
    \vspace{-1.5mm}
\end{equation}
where $\mathbf{G}$ denotes the state transition matrix, $\mathbf{u}_{\text{A},n}=[(v_{\text{A},n} - v_{\text{A},n-1})\Delta T, (v_{\text{A},n} - v_{\text{A},n-1})]^{T}$ denotes the increment of the UAV motion state, $v_{\text{A},n}$ denotes the UAV velocity and $\mathbf{z}_{\text{s},n}$ denotes the zero-mean Gaussian distributed process noise with the covariance matrix $\mathbf{Q}_{\text{s}}$ at the $n$th time slot. 
In (\ref{SEmodel}), the expressions of the state transition matrix $\mathbf{G}$ and the process noise covariance matrix $\mathbf{Q}_{\text{s}}$ can be derived from the constant-velocity model \cite{CVModel} and given by
\vspace{-1.5mm}
\begin{equation}
	\mathbf{G} = \begin{bmatrix}
        1 & \Delta T  \\
        0 & 1 
    \end{bmatrix}, 
    \mathbf{Q}_{\text{s}} = \begin{bmatrix}
        \frac{1}{3}\Delta T^{3} & \frac{1}{2}\Delta T^{2}  \\
        \frac{1}{2}\Delta T^{2} & \Delta T 
    \end{bmatrix}\tilde{q},
    \vspace{-1.5mm}
\end{equation}
respectively, where $\tilde{q}$ is the process noise intensity \cite{CVModel}.
Given the state evolution model, the UAV obtains the estimated state variables for the $n$th time slot given by $\hat{\mathbf{x}}_{n} = \breve{\mathbf{x}}_{n} + \mathbf{K}_{n}(\mathbf{y}_{n} - \mathbf{h}(\breve{\mathbf{x}}_{n}))$, where $\breve{\mathbf{x}}_{n}$ denotes the predicted state variables obtained by\footnote{At the initial time slot, the UAV can estimate the object position and velocity as a mono-static radar.} $\breve{\mathbf{x}}_{n} = \mathbf{G}\hat{\mathbf{x}}_{n-1} - \mathbf{u}_{\text{A},n} = [\breve{x}_{n},\breve{v}_{n}]^{T}$, $\mathbf{K}_{n}$ denotes the Kalman gain matrix given by $\mathbf{K}_{n} = \mathbf{M}_{\text{p},n}\mathbf{H}_{n}^{H}(\mathbf{Q}_{\text{m},n} + \mathbf{H}_{n}\mathbf{M}_{\text{p},n}\mathbf{H}_{n}^{H} )^{-1}$, $\mathbf{M}_{\text{p},n}=\mathbf{G}\mathbf{M}_{n-1}\mathbf{G}^{H} + \mathbf{Q}_{\text{s}}$ denotes the state prediction MSE matrix, $\mathbf{M}_{n-1}$ denotes the estimation MSE matrix at the $(n-1)$th time slot, $\mathbf{y}_{n}$, $\mathbf{h}(\cdot)$, $\mathbf{H}_{n}$ and $\mathbf{Q}_{\text{m},n}$ denote the measurement variables, the measurement model, the Jacobian matrix for $\mathbf{h}(\cdot)$ and the measurement noise covariance matrix at the $n$th time slot, respectively, which will be specified in the following subsection.

\subsection{Signal Model}
\subsubsection{Radar Signal Model}

Let $s_{n}(t)\in\mathbb{C}$ denote the information-bearing signal transmitted to the device at time $t$ in the $n$th time slot. 
Then, the transmitted ISAC signal can be denoted by $\mathbf{s}_{n}(t)=\mathbf{f}_{n}s_{n}(t)$ where $\mathbf{f}_{n}\in\mathbb{C}^{N_{\text{t}}\times 1}$ denotes the transmit beamforming vector in the $n$th time slot. 
The object is modelled as a point scatterer with its radar cross section denoted by $\varepsilon$. 
Since the UAV-object channel is dominated by the direct LoS link \cite{KTMeng-2023-TWC-UAV-IPSAC}, the large-scale radar channel gain follows a free-space path-loss model: $G_{\text{r},n} = \beta_{\text{r}}/d_{n}^{4}$, where $d_{n}$ is the Euclidean distance between the object and the UAV, and the coefficient $\beta_{\text{r}}$ is defined as $\beta_{\text{r}}\triangleq\lambda^{2}\varepsilon/(64\pi^{3})$ with the carrier wavelength denoted by $\lambda$ \cite{largescale}. 
Let $\tau_{n}$ and $\phi_{n}$ denote the round-trip time delay and the elevation angle of the geographical path connecting the UAV to the object at the $n$th time slot, respectively. 
Thus, the radar channel can be expressed as $\mathbf{h}^{\text{R}}_{n} = \sqrt{G_{\text{r},n}}e^{-j4\pi\frac{d_{n}}{\lambda}}\mathbf{b}(\phi_{n})\mathbf{a}^{H}(\phi_{n})$, where $\mathbf{a}(\phi) = [e^{j\frac{\pi(N_{\text{t}}-1)\cos{\phi}}{2}}, ..., e^{-j\frac{\pi(N_{\text{t}}-1)\cos{\phi}}{2}}]^{T}$ and $\mathbf{b}(\phi) = [e^{j\frac{\pi(N_{\text{r}}-1)\cos{\phi}}{2}}, ..., e^{-j\frac{\pi(N_{\text{r}}-1)\cos{\phi}}{2}}]^{T}$.
The reflected echo signals received at the UAV can be given in the form\footnote{The self-interference caused by the simultaneous signal transmission and reception can be cancelled out via radio frequency full-duplex techniques.} $\mathbf{r}_{n}(t) = \sqrt{P_{\text{A}}}\mathbf{h}^{\text{R}}_{n}e^{j2\pi\mu_{n}t}\mathbf{f}_{n}s_{n}(t-\tau_{n}) + \mathbf{z}_{\text{r},n}(t)$, where $P_{\text{A}}$ is the transmit power, $\mu_{n}$ is the Doppler shift and $\mathbf{z}_{\text{r},n}(t)\in\mathbb{C}^{N_{\text{r}}\times 1}$ denotes the complex additive white Gaussian noise with zero mean and variance of $\sigma^{2}$.

\subsubsection{Radar Measurement Model} \label{Subsec::sym-mea}

At the $n$th time slot, the measurement variables are given by $\mathbf{y}_{n}=[\hat{\phi}_{n},\hat{\tau}_{n},\hat{\mu}_{n}]^{T}$, where $\hat{\phi}_{n}$, $\hat{\tau}_{n}$ and $\hat{\mu}_{n}$ denote the measured elevation angle, time delay and Doppler shift, respectively. 
Specifically, $\hat{\phi}_{n}$ can be obtained via the maximum likelihood estimation, and then $\hat{\tau}_{n}$ and $\hat{\mu}_{n}$ can be obtained by performing the standard matched-filtering, which are further expressed as\cite{FanLiu2020TWC-PB}
\vspace{-1.5mm}
\begin{equation}\label{MeasurementModel}
    \left\{ \begin{aligned}
        \hat{\phi}_{n} &= \arctan{(H/x_{n})} + z_{1,n}, \\
        \hat{\tau}_{n} &= 2(x_{n}^{2} + H^2)^{\frac{1}{2}}/c + z_{2,n}, \\
        \hat{\mu}_{n} &= -2f_{c}v_{n}x_{n}/(c(x_{n}^{2} + H^2)^{\frac{1}{2}}) + z_{3,n},
    \end{aligned} 
    \right.
    \vspace{-1.5mm}
\end{equation}
where $c$ is the speed of light, $z_{i,n}\sim\mathbf{\mathcal{N}}\left(0, \sigma_{i,n}^{2}\right), i=1,2,3$ denote the measurement noise for the elevation angle, the time delay and the Doppler frequency, respectively.
Further, the corresponding measurement noise variance $\sigma_{1,n}^{2}$, $\sigma_{2,n}^{2}$ and $\sigma_{3,n}^{2}$ can be modelled by
\vspace{-1.5mm}
\begin{equation}\label{measuresigma}
    \left\{ \begin{aligned}
        \sigma_{1,n}^{2} &= (a_{1}^{2}\sigma^{2})/(P_{\text{A}} N_{\text{sym}} N_{\text{t}} N_{\text{r}} G_{\text{r},n} \sin^{2}{\phi_{n}}),  \\
        \sigma_{i,n}^{2} &= (a_{i}^{2}\sigma^{2})/(P_{\text{A}} N_{\text{sym}} N_{\text{t}} N_{\text{r}} G_{\text{r},n}), i=2,3, 
    \end{aligned}
    \right.
    \vspace{-1.5mm}
\end{equation}
where $N_{\text{sym}}$ is the number of symbols received during each time slot, and $a_{i}, i=1,2,3$ each denotes a constant parameter related to the system configuration \cite{FanLiu2020TWC-PB}. 
Given (\ref{MeasurementModel}) and (\ref{measuresigma}), the radar measurement model can thus be compactly formulated as $\mathbf{y}_{n} = \mathbf{h}(\mathbf{x}_{n}) + \mathbf{z}_{\text{m},n}$, where $\mathbf{h}(\cdot)$ is defined in (\ref{MeasurementModel}) and $\mathbf{z}_{\text{m},n} = [z_{1,n}, z_{2,n}, z_{3,n}]^{T}$ denotes the measurement noise vector with its covariance matrix given by $\mathbf{Q}_{\text{m},n} = \mathrm{diag}(\sigma_{1,n}^{2}, \sigma_{2,n}^{2}, \sigma_{3,n}^{2})$. 
Moreover, the Jacobian matrix for $\mathbf{h}(\cdot)$ can be given by 
\vspace{-1.5mm}
\begin{equation}
	\mathbf{H}_{n} = \begin{bmatrix}
        \iota(\breve{x}_{n}) & \kappa(\breve{x}_{n}) & \zeta(\breve{x}_{n},\breve{v}_{n}) \\
         0 & 0 & \nu(\breve{x}_{n})
    \end{bmatrix}^{T}, 
    \vspace{-1.5mm}
\end{equation}
where the expressions of $\iota(\breve{x}_{n})$, $\kappa(\breve{x}_{n})$, $\zeta(\breve{x}_{n},\breve{v}_{n})$ and $\nu(\breve{x}_{n})$ are given in (\ref{formu::Hne}), respectively.

\subsubsection{Communication Signal Model}
At the $n$th time slot, the downlink LoS channel between the UAV and the device can be expressed as \cite{largescale} $\mathbf{h}^{\text{C}}_{n} = \sqrt{G_{\text{c},n}}e^{-j2\pi\frac{d_{n}}{\lambda}}\mathbf{a}^{H}(\phi_{n})$, where $G_{\text{c},n}=\lambda^{2}/(16\pi^{2}d_{n}^{2})$ is the large-scale channel power gain. 
Then, the ISAC signal received by the device at the $n$th time slot can be represented by $c_{n}(t) = \sqrt{P_{\text{A}}}\mathbf{h}^{\text{C}}_{n}e^{j2\pi\varrho_{n}t}\mathbf{f}_{n}s_{n}(t-\tau_{n}) + z_{\text{c}}(t)$, where $\varrho_{n}$ denotes the Doppler shift observed at the object, and $z_{\text{c}}(t)\sim\mathbf{\mathcal{N}}\left(0, \sigma_{\text{C}}^{2}\right)$ is the additive Gaussian noise. 
Therefore, the achievable date rate can be denoted by $R_{n}=\mathrm{log}_{2}(1+\mathrm{SNR}_{n}^{\text{C}})$, where $\mathrm{SNR}_{n}^{\text{C}}=P_{\text{A}}N_{\text{t}}G_{\text{c},n}/\sigma_{\text{C}}^{2}$ is the receive signal-to-noise ratio (SNR) of the device.

\vspace{-1.5mm}
\subsection{Problem Formulation}
\vspace{-1mm}
In our proposed tracking scheme, the sensing performance is characterized by the weighted sum of predicted PCRBs for state variables \cite{FanLiu2020TWC-PB}. 
Specifically, let $\breve{\mathbf{M}}_{n}$ denote the predicted $n$th time slot's MSE matrix, which can be given by \cite{MKay} $\breve{\mathbf{M}}_{n} = (\breve{\mathbf{M}}_{\text{m},n}^{-1} + \mathbf{M}_{\text{p},n}^{-1})^{-1}$, where $\breve{\mathbf{M}}_{\text{m},n} = (\mathbf{H}_{n}^{H}\breve{\mathbf{Q}}_{\text{m},n}^{-1}\mathbf{H}_{n})^{-1}$ denotes the predicted measurement MSE matrix and $\breve{\mathbf{Q}}_{\text{m},n}=\mathrm{diag}(\breve{\sigma}_{1,n}^{2}, \breve{\sigma}_{2,n}^{2}, \breve{\sigma}_{3,n}^{2})$ denotes the predicted measurement noise covariance matrix. 
The expressions of $\breve{\sigma}_{i,n}^{2}, i=1,2,3$ are given by
\vspace{-1.5mm}
\begin{equation}\label{predMeaNoiVar}
    \!\!\!\!\! \left\{ \begin{aligned}
        &\breve{\sigma}_{1,n}^{2} = a_{1}^{2}\sigma^{2}(H^{2}+\breve{x}_{n}^{2})^{3}/(P_{\text{A}} N_{\text{sym}} N_{\text{t}} N_{\text{r}} \beta_{\text{r}} H^{2}),  \\
        &\breve{\sigma}_{i,n}^{2} = a_{i}^{2}\sigma^{2}(H^{2}+\breve{x}_{n}^{2})^{2}/(P_{\text{A}} N_{\text{sym}} N_{\text{t}} N_{\text{r}} \beta_{\text{r}}), i=2,3.
    \end{aligned}
    \right.
    \vspace{-1mm}
\end{equation}
Thus, the expressions of the predicted PCRBs for the $n$th time slot's state variables can be given by \cite{MKay}
\vspace{-1.5mm}
\begin{equation}\label{formu::PCRB}
    \breve{\mathrm{PCRB}}_{\text{x},n} = \frac{[\breve{\mathbf{M}}_{n}^{-1}]_{22}}{\text{det}(\breve{\mathbf{M}}_{n}^{-1})}, \breve{\mathrm{PCRB}}_{\text{v},n} = \frac{[\breve{\mathbf{M}}_{n}^{-1}]_{11}}{\text{det}(\breve{\mathbf{M}}_{n}^{-1})},
    \vspace{-1mm}
\end{equation}
respectively. 
More specific expressions of $\breve{\mathrm{PCRB}}_{\text{x},n}$ and $\breve{\mathrm{PCRB}}_{\text{v},n}$ are given by (\ref{ABdef}) and (\ref{Ldef}), respectively, where $r_{n,ij}=[\mathbf{M}_{\text{p},n}^{-1}]_{ij},i,j=1,2$.

Without loss of generality, the optimization problem can be formulated as
\vspace{-1.5mm}
\begin{align}
    (\mathrm{P1}): \ &\min_{\breve{\mathbf{x}}_{n}} \ \ \alpha\breve{\mathrm{PCRB}}_{\text{x},n} + (1-\alpha)\breve{\mathrm{PCRB}}_{\text{v},n} \label{opt-obj}\\
    \text{s.t.} \ 
    &\breve{R}_{n} \geq \gamma_{\text{C}}, \tag{\ref{opt-obj}{a}} \label{opt-cstrt-a} \\
    &\left| \breve{x}_{n} - \eta_{n-1} \right| \leq v_{\text{A,max}}\Delta T, \tag{\ref{opt-obj}{b}} \label{opt-cstrt-b} \\
    &\breve{x}_{n} - \Delta T\breve{v}_{n} - \hat{x}_{n-1} = 0, \tag{\ref{opt-obj}{c}} \label{opt-cstrt-c}
    \vspace{-1.5mm}
\end{align}
where $\alpha\in [0,1]$ is the weighting factor, $\breve{R}_{n}=\log_{2}(1+ \frac{P_{\text{A}}\lambda^{2}N_{\text{t}}}{16\pi^{2}(\breve{x}_{n}^{2} + H^2)\sigma_{\text{C}}^{2}} )$ denotes the predicted data rate for the $n$th time slot, $\gamma_{\text{C}}$ represents the communication quality of service (QoS) threshold, $v_{\text{A,max}}$ represents the maximum UAV velocity and $\eta_{n-1}\triangleq\hat{x}_{n-1} + \hat{v}_{n-1}\Delta T + v_{\text{A},n-1}\Delta T$ is derived from the state evolution model. 
In (P1), the predicted QoS constraint (\ref{opt-cstrt-a}) represents that $\breve{R}_{n}$ should be larger than $\gamma_{\text{C}}$ if the UAV reaches the designed trajectory $x_{\text{A},n}$ at the $n$th time slot. 
The inequality (\ref{opt-cstrt-b}) represents the UAV maximum velocity constraint, and the equation (\ref{opt-cstrt-c}) represents the coupling between the predicted state variables.
By solving (P1), the UAV trajectory can be designed as $x_{\text{A},n} = \eta_{n-1} + x_{\text{A},n-1} - \breve{x}_{n}$ with the velocity $v_{\text{A},n} = (x_{\text{A},n} - x_{\text{A},n-1})/\Delta T$. 
However, (P1) is difficult to be optimally solved due to the non-convex objective function.

\begin{figure*}[!t]
    \vspace{-5mm}
    \begin{gather}
        \iota(\breve{x}_{n}) = \frac{H}{H^{2} + \breve{x}_{n}^{2}} , \ \
        \kappa(\breve{x}_{n}) = \frac{2\breve{x}_{n}}{c(H^{2} + \breve{x}_{n}^{2})^{\frac{1}{2}}} , \ \
        \zeta(\breve{x}_{n},\breve{v}_{n}) = \frac{-2f_{c}\breve{v}_{n}H^{2}}{c(H^{2} + \breve{x}_{n}^{2})^{\frac{3}{2}}}, \ \
        \nu(\breve{x}_{n}) = \frac{-2f_{c}\breve{x}_{n}}{c(H^{2} + \breve{x}_{n}^{2})^{\frac{1}{2}}}, \label{formu::Hne} \\
        [\breve{\mathbf{M}}_{n}^{-1}]_{22} = r_{n,22} + \frac{\nu^{2}(\breve{x}_{n})}{\breve{\sigma}_{3,n}^{2}},  \ \
        [\breve{\mathbf{M}}_{n}^{-1}]_{11} = r_{n,11} + \frac{\iota^{2}(\breve{x}_{n})}{\breve{\sigma}_{1,n}^{2}} + \frac{\kappa^{2}(\breve{x}_{n})}{\breve{\sigma}_{2,n}^{2}} + \frac{\zeta^{2}(\breve{x}_{n},\breve{v}_{n})}{\breve{\sigma}_{3,n}^{2}}, \label{ABdef} \\
        \text{det}(\breve{\mathbf{M}}_{n}^{-1}) = [\breve{\mathbf{M}}_{n}^{-1}]_{22}[\breve{\mathbf{M}}_{n}^{-1}]_{11} - (\frac{\nu(\breve{x}_{n})\zeta(\breve{x}_{n},\breve{v}_{n})}{\breve{\sigma}_{3,n}^{2}} + r_{n,12})(\frac{\nu(\breve{x}_{n})\zeta(\breve{x}_{n},\breve{v}_{n})}{\breve{\sigma}_{3,n}^{2}} + r_{n,21}), \label{Ldef} \\
        \rho_{0} = a_1^4 H^4, \rho_{2} = 3\rho_{1}/2 = 288\rho_{0}, \rho_{3} = \rho_{1} + 3\rho_{5}/2, \rho_{4} = 48 \rho_{0} + 17\rho_{5}/4, \rho_{5} =24 a_2^2 c^2 \sqrt{\rho_{0}}, \rho_{6} =-3\rho_{7}=-3 a_{2}^{2}\xi(H), \label{rhocof}
        \vspace{-5mm}
    \end{gather}
    \rule{18.2cm}{0.5pt}
    \vspace{-10mm}
\end{figure*}

\section{Proposed Solution and Approximation} \label{Sec::opt}

\subsection{Proposed Solution to \rm{(P1)}} \label{Subsec::opt-solution}

To efficiently solve (P1), we firstly substitute $(\breve{x}_{n} - \hat{x}_{n-1})/\Delta T$ into $\breve{v}_{n}$ in (P1) according to the constraint (\ref{opt-cstrt-c}). 
As such, the objective function of (P1) can be denoted by $f(\breve{x}_{n})$. 
Then, we apply the SCA method \cite{MMSCA} to obtain a locally optimal solution to (P1). 
Specifically, a sequence denoted by $\{\tilde{x}_{k}^{*}\}$ is iteratively generated to approach the locally optimal solution to (P1). 
The initial point $\tilde{x}_{0}^{*}$ can be set as a feasible solution to (P1). 
In the $k$th iteration, $\tilde{x}_{k}^{*}$ is obtained as the optimal solution to the following optimization problem, i.e.  
\vspace{-1.5mm}
\begin{align}
    (\mathrm{P1.k}): \ &\min_{\tilde{x}_{k}} \ \ \tilde{f}_{k}(\tilde{x}_{k};\tilde{x}_{k-1}^{*}) \label{opt-obj-1k}\\
    \text{s.t.} \ 
    &|\tilde{x}_{k}| \leq x_{\text{c}}, \tag{\ref{opt-obj-1k}{a}} \label{opt-1k-cstrt-a} \\
    &\left| \tilde{x}_{k} - \eta_{n-1} \right| \leq v_{\text{A,max}}\Delta T, \tag{\ref{opt-obj-1k}{b}} \label{opt-1k-cstrt-b}
    \vspace{-6.5mm}
\end{align}  
where the objective function in (\ref{opt-obj-1k}) is defined as $\tilde{f}_{k}(\tilde{x}_{k};\tilde{x}_{k-1}^{*}) \triangleq |f^{(2)}(\tilde{x}_{k-1}^{*})| (\tilde{x}_{k} - \tilde{x}_{k-1}^{*})^{2}/2 + f^{(1)}(\tilde{x}_{k-1}^{*})(\tilde{x}_{k} - \tilde{x}_{k-1}^{*})$, $f^{(i)}(\breve{x}_{n})$ denotes the $i$th order derivate of $f(\breve{x}_{n})$ and $x_{\text{c}} = (\frac{P_{\text{A}}\lambda^{2}N_{\text{t}}}{16\pi^{2}\sigma_{\text{C}}^{2}(2^{\gamma_{\text{C}}}-1)} - H^{2})^{\frac{1}{2}}$ holds due to the constraint (\ref{opt-cstrt-a}). 
(P1.k) is a convex quadratic programming which can be optimally solved by the active set method \cite{ActiveSet}. 
By successively solving (P1.k), a locally optimal solution to (P1) can be found. 

\subsection{Measurement MSE-only Case} \label{Subsec::opt-CRB}

In this subsection, we consider a special case of (P1) to approximate the relative motion state that minimizes the weighted sum of predicted PCRBs. 
Specifically, the state prediction MSE matrix $\mathbf{M}_{\text{p},n}$ is relaxed in our considered case, which practically approximates the case where the predicted measurement MSE for state variables is much lower than the state prediction MSE. 
Moreover, to focus on obtaining the relative motion state that minimizes the weighted sum of predicted PCRBs, the constraints (\ref{opt-cstrt-a})-(\ref{opt-cstrt-c}) are further relaxed, corresponding to the practical case with small communication QoS threshold $\gamma_{\text{C}}$ and large maximum UAV velocity $v_{\text{A,max}}$. 
In this case, the predicted state variables equivalently represent the actual relative position $x_{n}$ and velocity $v_{n}$ between the UAV and the object, and the predicted PCRBs for state variables are correspondingly reduced as the measurement CRBs, denoted by $\mathrm{CRB}_{\text{x},n}(x_{n})$ and $\mathrm{CRB}_{\text{v},n}(x_{n},v_{n})$, respectively.
Specifically, $\mathrm{CRB}_{\text{x},n}(x_{n})$ and $\mathrm{CRB}_{\text{v},n}(x_{n},v_{n})$ can be obtained as the diagonal elements of $\breve{\mathbf{M}}_{\text{m},n}$ with $\breve{\mathbf{x}}_{n}$ substituted by $\mathbf{x}_{n}$ and further expressed as $\mathrm{CRB}_{\text{x},n}(x_{n}) = (\frac{\iota^{2}(x_{n})}{ \sigma_{1,n}^{2}} + \frac{\kappa^{2}(x_{n})}{ \sigma_{2,n}^{2}} )^{-1}$ and $\mathrm{CRB}_{\text{v},n}(x_{n},v_{n}) = (\frac{\nu^{2}(x_{n})}{ \sigma_{3,n}^{2}})^{-1} + \frac{\zeta^{2}(x_{n},v_{n})}{\nu^{2}(x_{n})}(\frac{\iota^{2}(x_{n})}{ \sigma_{1,n}^{2}} + \frac{\kappa^{2}(x_{n})}{ \sigma_{2,n}^{2}})^{-1}$, respectively.
Then, the considered special case of (P1) can be formulated as\footnote{For notational simplicity, we only consider the case with $x_{n} \geq 0$ in (SP1) since $g(x_{n},v_{n})$ is symmetric between the cases where $x_{n}>0$ and $x_{n}<0$.}
\vspace{-1.5mm}
\begin{equation}
    (\mathrm{SP1}): \ \min_{\mathbf{x}_{n}} \ \ g(x_{n},v_{n}) \ \ \text{s.t.} \ x_{n} \geq 0, \label{opt-sp}
    \vspace{-1.5mm}
\end{equation}
where $g(x_{n},v_{n}) = \alpha\mathrm{CRB}_{\text{x},n}(x_{n}) + (1-\alpha)\mathrm{CRB}_{\text{v},n}(x_{n},v_{n})$ denotes the objective function of (SP1) which is generally non-convex for $x_{n}\in[0,\infty)$.

To optimally solve (SP1), we firstly find the optimal relative velocity $v_{n}^{*}$. 
For any given feasible $x_{n}$, $\frac{\partial^{2} g(x_{n},v_{n})}{\partial v_{n}^{2}}$ is nonnegative, thus $v_{n}^{*}$ can be given by the solution to the equation $\frac{\partial g(x_{n},v_{n}^{*})}{\partial v_{n}^{*}} = 0$ derived from the Karush-Kuhn-Tucker (KKT) conditions \cite{CVX}, i.e., $v_{n}^{*}=0$. 
As a result, minimizing $g(x_{n},v_{n})$ is equivalent to minimizing $g(x_{n},0)$. 
Then, a sufficient condition for $g(x_{n},0)$ being convex is given as follows.
\begin{lemma}\label{lem1}
    $g(x_{n},0)$ is convex for $x_{n}\in [x_{l}, \infty)$ with
    \vspace{-1.5mm}
    \begin{equation}
        x_{l} = \left\{ \begin{aligned}
            &0, \ \ \xi(H) \leq 0,  \\
            &H/\sqrt{\overline{\chi}}, \ \ \xi(H) > 0, 
        \end{aligned}
        \right.
        \vspace{-1.5mm}
    \end{equation}
    with $\overline{\chi} = 2(1+(5c^{2}a_{2}^{2}/\xi(H)))^{\frac{1}{2}} $ and $\xi(H) = 4 a_{1}^{2} H^{2} - 5 c a_{2}^{2}$.
\end{lemma}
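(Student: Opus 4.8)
The plan is to first reduce the lemma to a statement about the position CRB alone, and then to the sign of a single univariate polynomial. Since the lemma concerns $x_n\mapsto g(x_n,0)$, I note that putting $v_n=0$ kills $\zeta$ in (\ref{formu::Hne}), so $\mathrm{CRB}_{\text{v},n}(x_n,0)=\breve{\sigma}_{3,n}^2/\nu^2(x_n)$. Inserting $\iota,\kappa,\nu$ from (\ref{formu::Hne}) and $\breve{\sigma}_{1,n}^2,\breve{\sigma}_{2,n}^2,\breve{\sigma}_{3,n}^2$ from (\ref{predMeaNoiVar}) with $\breve x_n$ replaced by $x_n$, and writing $D\triangleq H^2+x_n^2$, I would first derive the closed forms $\mathrm{CRB}_{\text{x},n}(x_n)=c_1D^5/(a_2^2c^2H^4+4a_1^2x_n^2D^2)$ and $\mathrm{CRB}_{\text{v},n}(x_n,0)=c_2D^3/x_n^2$ for positive constants $c_1,c_2$ depending only on the system parameters. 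Because $\alpha\in[0,1]$ and a nonnegative combination of convex functions is convex, it suffices to prove that each term is convex on $[x_l,\infty)$; the velocity term is trivial, since $D^3/x_n^2=H^6x_n^{-2}+3H^4+3H^2x_n^2+x_n^4$ has second derivative $6H^6x_n^{-4}+6H^2+12x_n^2>0$ for all $x_n>0$. Hence the lemma reduces to the convexity of $\mathrm{CRB}_{\text{x},n}$.

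To analyze $\mathrm{CRB}_{\text{x},n}=c_1N/M$ with $N=D^5>0$ and $M=a_2^2c^2H^4+4a_1^2x_n^2D^2>0$, I would use the identity $(N/M)''=\big(N''M^2-NM''M-2N'M'M+2N(M')^2\big)/M^3$, so convexity is equivalent to nonnegativity of the numerator. Since $N,M$ and their derivatives are all even in $x_n$ and the numerator carries the positive factor $D^3=(H^2+x_n^2)^3$, dividing by it and substituting $t\triangleq x_n^2/H^2\ge0$ reduces the question to the sign of a single degree-$7$ polynomial $\Theta(t)$. A careful collection of terms shows that $\Theta(t)$ is, up to a positive scalar, the polynomial with coefficients $\rho_0,\dots,\rho_7$ defined in (\ref{rhocof}); the decisive structural facts are that $\rho_0,\dots,\rho_5>0$ for every parameter choice, whereas $\rho_6=-3\rho_7$ and $\rho_7$ has the same sign as $\xi(H)$, so that $\Theta(0)$ has the sign of $-\xi(H)$, in agreement with $\mathrm{CRB}_{\text{x},n}$ being strictly convex near $x_n=0$ precisely when $\xi(H)<0$.

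It remains to determine the sign of $\Theta$ on $[0,\infty)$. When $\xi(H)\le0$, the constant term is $\rho_6=-3\rho_7\ge0$ and the only possibly-negative coefficient is that of $t$; bounding this term by the positive ones (no threshold being needed here) gives $\Theta(t)\ge0$ throughout $[0,\infty)$, so $x_l=0$ is valid. When $\xi(H)>0$, one has $\rho_6<0$ while $\rho_0,\dots,\rho_5,\rho_7>0$, so the coefficient sequence of $\Theta$ changes sign exactly once; hence (e.g.\ by Descartes' rule of signs) $\Theta$ has a unique positive root, and since $\Theta(0)<0$ and $\Theta(t)\to+\infty$ it is nonnegative exactly beyond that root. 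Verifying $\Theta(t_l)\ge0$ with $t_l=1/\overline\chi$—which is precisely the point at which the value $\overline\chi=2\sqrt{1+5c^2a_2^2/\xi(H)}$ enters—then forces the root to lie at or below $t_l$, hence $\Theta(t)\ge0$ for all $t\ge t_l$, i.e.\ $\mathrm{CRB}_{\text{x},n}$ is convex for $x_n\ge H\sqrt{t_l}=H/\sqrt{\overline\chi}=x_l$, which proves the claim. The main obstacle is the computational core—differentiating the rational function (of degrees $10$ and $6$ in $x_n$), extracting the factor $D^3$, and organizing the remainder into the $\rho$-indexed degree-$7$ polynomial—together with the sharp algebraic check that pins the convexity threshold at exactly $\overline\chi$ rather than at some larger value.
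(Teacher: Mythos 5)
Your overall route is the paper's route: split $g(x_n,0)=\alpha\,\mathrm{CRB}_{\text{x},n}+(1-\alpha)\,\mathrm{CRB}_{\text{v},n}$, observe that the velocity term $\propto D^3/x_n^2$ is trivially convex for $x_n>0$, and reduce convexity of $\mathrm{CRB}_{\text{x},n}$ to the sign of a degree-7 polynomial obtained from the numerator of its second derivative (the paper works in $\chi=H^2/x_n^2$, you in $t=1/\chi$; that is cosmetic), finishing with Descartes' rule plus an evaluation at the threshold $\overline{\chi}$. Your closed forms for the two CRBs and the convexity of the velocity term are correct, and leaving the boundary evaluation ($\Theta(t_l)\ge 0$, equivalently $\mathrm{CRB}^{(2)}_{\text{x},n}(\overline{\chi})>0$) as an asserted computation puts you on the same footing as the paper, which also only asserts it.

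The genuine gap is in your sign bookkeeping for $\Theta$, and it breaks the key step in the $\xi(H)>0$ case. Your stated facts are mutually inconsistent: you claim the coefficients are $\rho_0,\dots,\rho_7$ with $\rho_0=a_1^4H^4>0$ as constant term, yet also that $\Theta(0)$ has the sign of $-\xi(H)$, and later that ``the constant term is $\rho_6$''; these cannot all hold. More importantly, with the sign pattern you assert for $\xi(H)>0$ (namely $\rho_0,\dots,\rho_5,\rho_7>0$ and only $\rho_6<0$), the coefficient sequence has \emph{two} sign changes, so Descartes' rule gives at most two positive roots and your conclusion ``unique positive root, hence nonnegative exactly beyond it'' does not follow; likewise $\Theta(0)<0$ contradicts a positive constant term. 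The argument only closes if the two \emph{extreme} coefficients on the small-$x_n$ side are the negative ones (in the paper's $\chi$-variable, $\rho_6$ and $\rho_7$ both negative when $\xi(H)>0$, i.e.\ $\rho_6=3\rho_7=-3a_2^2\xi(H)$, which is evidently what the paper's proof uses despite the sign slip in (\ref{rhocof})), giving exactly one sign change; your version, which takes $\rho_7$ to have the same sign as $\xi(H)$, does not. Relatedly, in the $\xi(H)\le 0$ case your plan invokes an unperformed ``bounding'' of a possibly negative coefficient, whereas with the correct signs all coefficients are nonnegative and no bounding is needed. To repair the proposal you must actually derive the coefficient signs of $\Theta$ (you never compute them), confirm the one-sign-change structure, and carry out the evaluation at $t_l=1/\overline{\chi}$; as written, the $\xi(H)>0$ branch is not proved.
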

\begin{proof}
    We firstly analyze the convexity of $\mathrm{CRB}_{\text{x},n}(x_{n})$ and $\mathrm{CRB}_{\text{v},n}(x_{n},0)$ for $x_{n}\in(0,\infty)$, respectively. 
    Specifically, the convexity of $\mathrm{CRB}_{\text{x},n}(x_{n})$ for positive $x_{n}$ depends on the sign of $\xi(H)$.
    Let us define a variable $\chi\triangleq(H/x_{n})^{2}$ and denote the $i$th order derivate of $\mathrm{CRB}_{\text{x},n}(x_{n})$ and $\mathrm{CRB}_{\text{v},n}(x_{n},0)$ by $\mathrm{CRB}^{(i)}_{\text{x},n}(x_{n})$ and $\mathrm{CRB}^{(i)}_{\text{v},n}(x_{n},0)$, respectively. 
    Then, by substituting $\chi$ into $x_{n}$, $\mathrm{CRB}^{(2)}_{\text{x},n}(\chi)$ can be expressed as
    \vspace{-2.5mm}
    \begin{equation} 
        \mathrm{CRB}^{(2)}_{\text{x},n}(\chi) = \frac{B (\chi+1)^3 (\sum_{j=1}^{7}\rho_{j}\chi^{j})}{\chi^4 (a_2^2 c^2 \chi^3+4 a_1^2 H^2 (\chi+1)^2)^3}, 
        \vspace{-1.5mm}
    \end{equation}
    where the expressions of $B$ and $\rho_{j},j=1,..,7$ are given by $B = \frac{2 a_1^2 a_2^2 c^2 H^6 \sigma^2}{N_{\text{r}} N_{\text{t}} P_A \beta_r N_{\text{sym}}}$ and (\ref{rhocof}), respectively.
    When $\xi(H)$ is nonpositive, $\rho_{j}\geq 0 ,j=1,..,7$ hold and thus $\mathrm{CRB}^{(2)}_{\text{x},n}(\chi)$ is positive for $\chi\in(0,\infty)$, which equivalently indicates that $\mathrm{CRB}_{\text{x},n}(x_{n})$ is convex for positive $x_{n}$. 
    When $\xi(H)$ is positive, both $\rho_{6}$ and $\rho_{7}$ are negative. 
    Thus, $\mathrm{CRB}^{(2)}_{\text{x},n}(\chi)$ has at most one zero for positive $\chi$ due to the Descartes' rule of signs. 
    However, since $\mathrm{CRB}^{(2)}_{\text{x},n}(\overline{\chi})$ is positive but $\lim\limits_{\chi\rightarrow\infty}\mathrm{CRB}^{(2)}_{\text{x},n}(\chi)$ is negative, $\mathrm{CRB}^{(2)}_{\text{x},n}(\chi)$ has at least one zero in the range given by $(\overline{\chi},\infty)$. 
    Therefore, $\mathrm{CRB}^{(2)}_{\text{x},n}(\chi)$ stays positive for $\chi\in(0,\overline{\chi}]$, which is equivalent to that $\mathrm{CRB}_{\text{x},n}(x_{n})$ is convex for $x_{n}\in[x_{l}, \infty)$. 
    Moreover, $\mathrm{CRB}_{\text{v},n}(x_{n},0)$ is convex for positive $x_{n}$ since $\mathrm{CRB}^{(2)}_{\text{v},n}(x_{n},0)$ is positive for $x_{n}\in(0,\infty)$. 
    Since $g(x_{n},0)$ is the nonnegative weighted sum of $\mathrm{CRB}_{\text{x},n}(x_{n})$ and $\mathrm{CRB}_{\text{v},n}(x_{n},0)$, it is convex for $x_{n}\in[x_{l}, \infty)$, which completes the proof.
\end{proof}

Given Lemma \ref{lem1}, we provide the optimal relative position $x_{n}^{*}$ in the following proposition. 
\begin{proposition}\label{pro1}
    The optimal relative position $x_{n}^{*}$ is unique and given by the stationary point of $g(x_{n},0)$ located in the range $[x_{l}, x_{u}]$ with $x_{u}=H/\sqrt{2}$. 
\end{proposition}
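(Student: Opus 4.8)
The plan is to combine the convexity established in Lemma~\ref{lem1} with explicit monotonicity information on the two constituents of $g(x_n,0)$, namely $\mathrm{CRB}_{\text{x},n}(x_n)$ and $\mathrm{CRB}_{\text{v},n}(x_n,0)$, in order to confine the global minimizer to $[x_l,x_u]$ and then identify it with the unique stationary point there. Since $v_n^*=0$ has already been established, it suffices to analyze $g(x_n,0)$ over $x_n\ge 0$.

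First I would study the velocity term. Using $\zeta(x_n,0)=0$ together with the expressions of $\nu$ and $\sigma_{3,n}^2$, one gets $\mathrm{CRB}_{\text{v},n}(x_n,0)\propto(H^2+x_n^2)^3/x_n^2$, whose derivative vanishes only at $x_n=x_u=H/\sqrt{2}$, is negative on $(0,x_u)$ and positive on $(x_u,\infty)$, and which tends to $\infty$ as $x_n\to 0^+$ and as $x_n\to\infty$. For the position term I would write $\mathrm{CRB}_{\text{x},n}$ as a ratio of polynomials in $u=x_n^2$ and show that the derivative of $\mathrm{CRB}_{\text{x},n}$ has the sign of $\psi(u)\triangleq 5a_2^2c^2H^4+4a_1^2(H^2+u)^2(2u-H^2)$; this $\psi$ is strictly increasing in $u$ (its $u$-derivative is $24a_1^2u(H^2+u)$) and satisfies $\psi(H^2/2)=5a_2^2c^2H^4>0$, so $\mathrm{CRB}_{\text{x},n}$ is strictly increasing on $[x_u,\infty)$. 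Hence $g(x_n,0)$ is strictly increasing on $[x_u,\infty)$ for every $\alpha\in[0,1]$; together with the existence of a global minimizer (because $g(x_n,0)\to\infty$ at both ends of $(0,\infty)$ when $\alpha<1$, while for $\alpha=1$ the continuous function $\mathrm{CRB}_{\text{x},n}$ attains its minimum over $[0,\infty)$ since it grows at infinity), this gives $x_n^*\le x_u$.

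For the bound $x_n^*\ge x_l$, the case $\xi(H)\le 0$ is immediate because then $x_l=0$. When $\xi(H)>0$, I would first note $x_l<x_u$, equivalently $\overline{\chi}>2$, which is clear from the definition of $\overline{\chi}$. Then I would show that $g(x_n,0)$ is strictly decreasing on $(0,x_l)$: the velocity term decreases there since $(0,x_l)\subset(0,x_u)$, and, since $\psi$ is increasing, the position term decreases there provided $\psi(u_l)\le 0$ at $u_l=x_l^2=H^2/\overline{\chi}$. Verifying $\psi(u_l)\le 0$ is the one genuine computation: substituting $u_l=H^2/\overline{\chi}$ and using the defining relation of $\overline{\chi}$ to eliminate $\xi(H)$ (equivalently $a_1^2H^2$), the inequality reduces to the elementary estimate $(\overline{\chi}+1)^2\ge\overline{\chi}(\overline{\chi}+2)$, which holds. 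Consequently $g(x_n,0)>g(x_l,0)$ for $x_n\in(0,x_l)$, so no minimizer lies in $(0,x_l)$, and therefore $x_n^*\in[x_l,x_u]$.

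Finally, by Lemma~\ref{lem1}, $g(x_n,0)$ is strictly convex on $[x_l,\infty)$, so its derivative is strictly increasing there and has at most one zero; on $(0,x_l)$ the derivative is strictly negative by the previous step, so there is no stationary point there either. Since a minimizer exists in $[x_l,x_u]$ and $g'(x_l^+)\le 0<\lim_{x_n\to\infty}g'(x_n)$, the derivative has exactly one zero on $(0,\infty)$, which must be the unique global minimizer and lies in $[x_l,x_u]$; this proves the proposition. I expect the only delicate point to be the degenerate corner $\alpha=1$ with $\xi(H)<0$, in which $\mathrm{CRB}_{\text{x},n}$ is strictly increasing on $(0,\infty)$ and the minimizer collapses to the boundary $x_n^*=x_l=0$; this remains consistent with the stated range $[x_l,x_u]$, and I would simply flag it.
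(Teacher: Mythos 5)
Your proof is correct and takes essentially the same route as the paper: establish $v_n^*=0$, bracket the minimizer of $g(x_n,0)$ in $[x_l,x_u]$ using the sign of its derivative near the endpoints, and then invoke the convexity from Lemma~\ref{lem1} to conclude there is a unique stationary point there which is the global minimizer. The main difference is that you explicitly verify the endpoint sign conditions (via the auxiliary polynomial $\psi(u)$, its monotonicity, $\psi(H^2/2)>0$, and $\psi(x_l^2)\le 0$) and handle all $\alpha\in[0,1]$ uniformly, whereas the paper asserts these derivative signs and instead argues by cases on $\alpha\in\{0,1\},(0,1)$ and on the sign of $\xi(H)$.
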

\begin{proof}
    We firstly discuss the following three cases: $\alpha=0$, $\alpha=1$ and $\alpha\in(0,1)$. 
    In the case with $\alpha=0$, $g(x_{n},0)=\mathrm{CRB}_{\text{v},n}(x_{n},0)$ is minimized at its stationary point $x_{n}^{*}=x_{u}$. 
    In the case with $\alpha=1$, $g(x_{n},0)=\mathrm{CRB}_{\text{x},n}(x_{n})$ is also minimized at its stationary point, which is given by $x_{n}^{*}=0$ when $\xi(H)$ is nonpositive and $x_{n}^{*}=H/\sqrt{\chi_{1}}>x_{l}$ with $\chi_{1}=\overline{\chi}\cos{(\frac{1}{3}\arctan{\frac{\sqrt{5}c a_{2}}{\sqrt{\xi(H)}}})}$ when $\xi(H)$ is positive. 
    In the case with $\alpha\in(0,1)$, we then discuss the following two subcases: nonpositive $\xi(H)$ and positive $\xi(H)$. 
    When $\xi(H)$ is nonpositive, $g^{(1)}(x_{n},0)$ is a monototically increasing function for positive $x_{n}$ since $g(x_{n},0)$ is convex for positive $x_{n}$ according to Lemma \ref{lem1}. 
    Then, because $g^{(1)}(0,0)g^{(1)}(x_{u},0)<0$ is satisfied, $g(x_{n},0)$ has and only has one stationary point located in the range $[x_{l}, x_{u}]$. 
    Again, since $g(x_{n},0)$ is convex for positive $x_{n}$, the optimal relative position $x_{n}^{*}$ is just given by the stationary point. 
    When $\xi(H)$ is positive, $g^{(1)}(x_{n},0)<0$ holds for $x_{n}\in(0, x_{l})$. 
    Since $g^{(1)}(x_{u},0)>0$ is still satisfied and $g(x_{n},0)$ is convex for $x_{n}\in[x_{l}, \infty)$, the optimal relative position $x_{n}^{*}$ is still given by the unique stationary point located in the range $[x_{l}, x_{u}]$, which completes the proof. 
\end{proof}

The stationary point in Proposition \ref{pro1} can be obtained using the Newton's method. 
Furthermore, the unique optimal relative position $x_{n}^{*}$ and the zero optimal relative velocity $v_{n}^{*}=0$ indicates that the optimal relative motion state between the UAV and the object, denoted by $\mathbf{x}_{n}^{*}=[x_{n}^{*}, 0]^{T}$, is a relatively static state where the UAV keeps a fixed elevation angle $\phi_{n}^{*} = \arctan{(H/x_{n}^{*})}$ relative to the object.

\begin{figure*}[htbp]
	\centering
	\vspace{-4mm}
	\subfigure[The UAV and object motion state.]{
		\label{simufig1}
		\vspace{-1mm}
		\includegraphics[width=5.5cm]{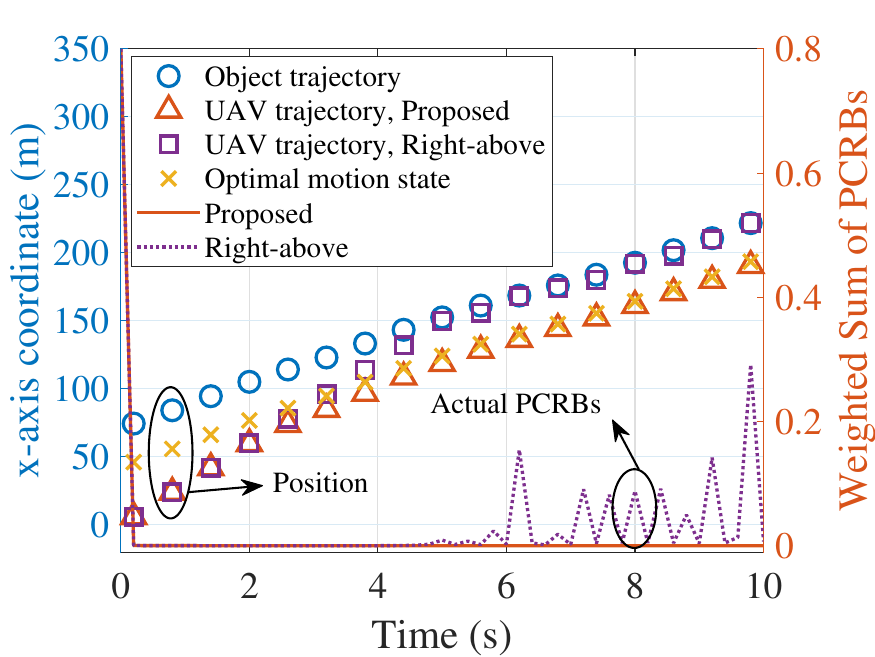}
	}
	\subfigure[The fixed elevation angle.]{
		\label{simufig2}
        \vspace{-1mm}
		\includegraphics[width=5.5cm]{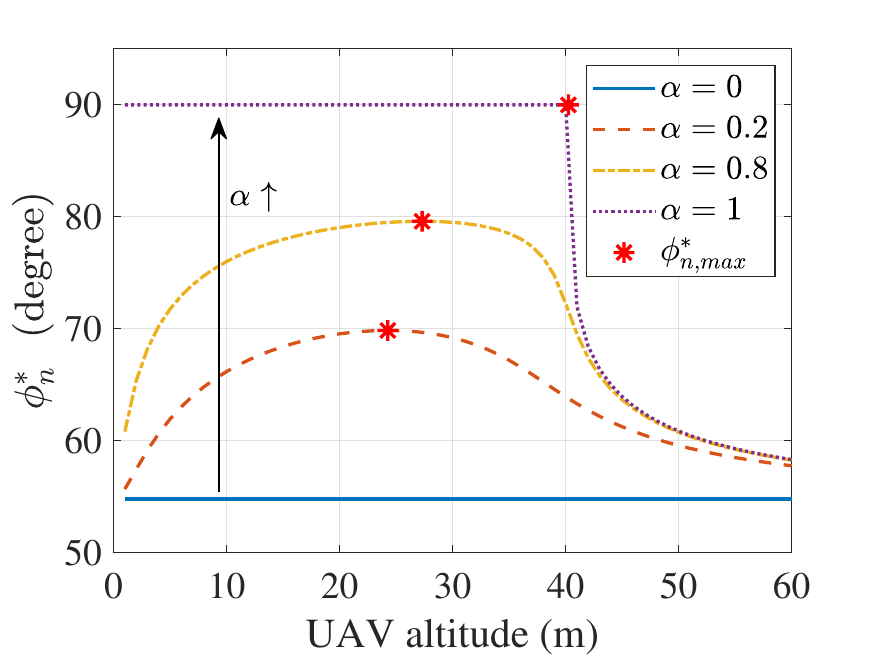}
	}
	\subfigure[The S\&C performance bounds.]{
		\label{simufig3}
        \vspace{-1mm}
		\includegraphics[width=5.5cm]{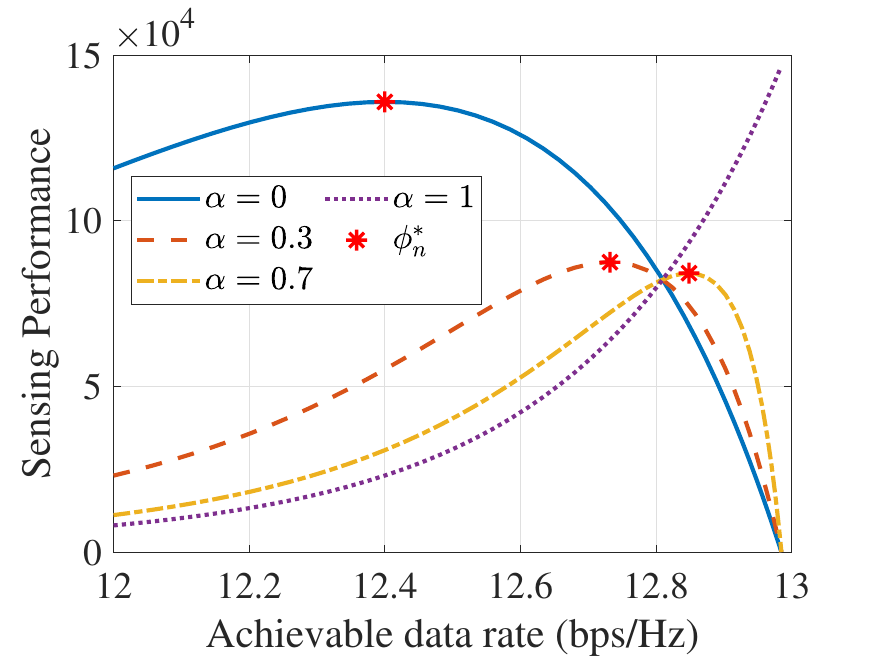}
	}
    \vspace{-2mm}
	\caption {Numerical results characterizing the proposed tracking scheme, the fixed elevation angle and the sensing\&communication (S\&C) performance bounds. }
	\label{figureSimu}
\end{figure*}

\section{Numerical results} \label{Sec::Num}

In this section, numerical results are provided for characterizing the solution in Section \ref{Subsec::opt-solution} and the optimal relative motion state derived in Section \ref{Subsec::opt-CRB}. 
Unless specified otherwise, the following required parameters are used: $P_{\text{A}} = 40$ dBm, $N_{\text{sym}} = 10^{4}$, $\Delta T = 0.2$ s, $\lambda = 0.01$ m, $f_{\text{c}} = 30$ GHz, $\sigma^{2} = \sigma_{\text{C}}^{2} = -80$ dBm, $\gamma_{\text{C}} = 11$ bps/Hz, $\tilde{q} = 5$, $\varepsilon = 100 \ \text{m}^{2}$ , $N_{\text{t}}=N_{\text{r}}=32$, $a_{1} = 1$, $a_{2} = 1.2\times 10^{-7}$, $a_{3} = 600$, and $H=50$ m. 
The proposed scheme is compared with the benchmark ``Right-above'': the UAV tends to be right above the moving object if (\ref{opt-cstrt-b}) is satisfied, or otherwise be as close to the moving object as possible.

Fig.\ref{simufig1} shows the trajectories of both the UAV and the moving object together with the weighted sum of actual PCRBs under the proposed tracking scheme and the benchmark in one trial, where the results are picked per $0.6$s and the weighting factor is set as $\alpha=0.5$.
Under this setup, the trace of the state prediction MSE matrix $\mathrm{Tr(\mathbf{M}_{\text{p},n})}$ and the trace of the predicted measurement MSE matrix $\mathrm{Tr(\breve{\mathbf{M}}_{\text{m},n})}$ at a typical time slot (i.e., $n=40$) are given by $\mathrm{Tr(\mathbf{M}_{\text{p},n})} = 1.014$ and $\mathrm{Tr(\breve{\mathbf{M}}_{\text{m},n})} = 2.38\times10^{-4} $, respectively, which represents the case where the predicted measurement MSE dominates the predicted PCRBs. 
Then, it can be easily observed that the UAV finally approximates its motion state as the optimal relative motion state $\mathbf{x}_{n}^{*}$, which validates the effectiveness of approximating the relative motion state that minimizes the weighted sum of predicted PCRBs by $\mathbf{x}_{n}^{*}$. 
Also, the effectiveness of our proposed tracking scheme is verified since it keeps a relatively low weighted sum of actual PCRBs. 
Moreover, our proposed tracking scheme outperforms the ``Right-above" scheme, which generates both the fluctuated UAV trajectory and the PCRBs.

Fig.\ref{simufig2} illustrates the varying trends of the fixed elevation angle $\phi_{n}^{*}$ as the UAV flying altitude $H$ increases under different weighting factors. 
It can be observed that $\phi_{n}^{*}$ in general first increases but then decreases. 
This is expected due to the \textbf{measurement trade-off:} $\iota^{2}(x_{n})/\sigma_{1,n}^{2}$ is maximized at $\phi_{n}=\pi/2$, however, $\kappa^{2}(x_{n})/\sigma_{2,n}^{2}$ and $\nu^{2}(x_{n})/\sigma_{3,n}^{2}$ are both maximized at $\phi_{n}=\arctan{\sqrt{2}}$. 
Let $\tilde{H}$ denote the UAV altitude that reaches $\phi_{n,\text{max}}^{*}$. 
When $H<\tilde{H}$ holds, the role of maximizing $\iota^{2}(x_{n})/\sigma_{1,n}^{2}$ on the measurement CRB minimization outweighs the counterpart of maximizing $\kappa^{2}(x_{n})/\sigma_{2,n}^{2}$ and $\nu^{2}(x_{n})/\sigma_{3,n}^{2}$, while the reverse holds when $H>\tilde{H}$ is satisfied. 
Particularly, $\phi_{n}^{*}$ stays constant as $\arctan{\sqrt{2}}$ with $\alpha=0$ because only $\nu^{2}(x_{n})/\sigma_{3,n}^{2}$ is needed to be maximized. 
In contrast, $\phi_{n}^{*}$ sharply decreases with $\alpha=1$ since $H>\tilde{H}\approx 40.25 \ \text{m}$ results in a sudden change between the importance of maximizing $\iota^{2}(x_{n})/\sigma_{1,n}^{2}$ and maximizing $\kappa^{2}(x_{n})/\sigma_{2,n}^{2}$ on minimizing the measurement CRB.
Furthermore, the increasing trend of $\phi_{n}^{*}$ as $\alpha$ increases from $0$ to $1$ under the same $H$ illustrates the \textbf{position-velocity trade-off:} the measurement CRB for the relative position is minimized at $\phi_{n}=\pi/2$ when $\xi(H)$ is nonpositive and at $\phi_{n}=\arctan{\sqrt{\chi_{1}}}$ when $\xi(H)$ is positive, however, the counterpart for relative velocity is minimized at $\phi_{n}=\arctan{\sqrt{2}}$.

Fig.\ref{simufig3} illustrates the varying trends of the sensing performance with the increasing of the communication performance with $a_{1}=0.15$, where the sensing performance is calculated as the inverse of $g(x_{n},0)$ and the communication performance is calculated as the achievable date rate $R_{n}$.
As the weighting factor $\alpha$ increases from $0$ to $1$, the sensing-communication performance bound successively approaches the performance bound with $\alpha=1$, which demonstrates the \textbf{sensing-communication trade-off:} the weighted sum of measurement CRBs is minimized at $\phi_{n}=\phi_{n}^{*}$ is satisfied, however, the achievable data rate is maximized at $\phi_{n}=\pi/2$.
Moreover, the sensing performance approaches zero with $\alpha \in [0,1)$ when the achievable data rate is maximized because $\phi_{n}=\pi/2$ results in infinite measurement CRB for the estimated relative velocity.

\section{Conclusion}

In this work, an EKF-based tracking scheme is proposed for a UAV tracking a moving object, where an efficient solution to the weighted sum-predicted PCRB minimization is obtained based on the SCA method. 
Under the predicted measurement MSE-dominant case, the solution can be approximated by the optimal relative motion state obtained under the measurement MSE-only case, which is proved to sustain a fixed elevation angle and zero relative velocity. 
Furthermore, simulation results validate the effectiveness of the proposed tracking scheme and the approximation as well as three interesting trade-offs on system performance achieved by the fixed elevation angle. 
More general scenarios where multiple objects moving along trajectories of arbitrary geometry are worthwhile future works. 


\bibliographystyle{IEEEtran}
\bibliography{IEEEabrv,ref}

\end{document}